\newcommand{\OPT}{\textrm{OPT}}
\newcommand{\NP}{\mathsf{NP}}
\newcommand{\APX}{\mathsf{APX}}
\newcommand{\stack}{Stackelberg network pricing}
\newcommand{\toll}{toll setting}
\newcommand{\river}{river tariff pricing}
\newcommand{\envyfree}{profit-maximizing envy-free pricing}
\newcommand{\tar}{d}                      
\newtheorem{theorem}{Theorem}
\newtheorem{lemma}{Lemma}
\newtheorem{claim}{Claim}
\title{Stackelberg Network Pricing is Hard to Approximate}
\author{
Gwena\"el Joret\thanks{Universit\'e Libre de Bruxelles (ULB), Facult\'e des Sciences, 
D\'epartement d'Informatique, c.p.~212,  B-1050 Brussels, Belgium, {\tt gjoret@ulb.ac.be}. 
Postdoctoral Researcher of the {\em Fonds
National de la Recherche Scientifique (F.R.S.--FNRS)}.}
}
\date{}
\begin{document}
\maketitle
\begin{abstract}
In the {\stack} problem, one has
to assign tariffs to a certain subset of the arcs of a given transportation network. 
The aim is to maximize the amount paid by the user of the network, knowing that the user 
will take a shortest $st$-path once the tariffs are fixed.
Roch, Savard, and Marcotte ({\it Networks}, Vol. 46(1), 57--67, 2005)
proved that this problem is $\NP$-hard, 
and gave an $O(\log m)$-approximation algorithm, where
$m$ denote the number of arcs to be priced.
In this note, we show that the problem is also $\APX$-hard.
\end{abstract}

{\noindent {\bf Keywords:}
Combinatorial optimization; APX-hardness; Network pricing; Stackelberg games
}

\section{Introduction}

We consider a network pricing problem involving two non-cooperative players,
a {\em leader} and a {\em follower}. 
The leader owns a subset of the arcs of a given transportation
network. She has to set tariffs on these arcs, knowing that the follower will compute a shortest
$st$-path once the tariffs are fixed. The goal is to maximize
the revenue of the leader, which depends on the path chosen by the follower.

This problem is known as {\it \stack}; it is formally described as follows: 

\begin{figure}[ht]
\begin{minipage}{0.14\textwidth}
$ $
\end{minipage}
\begin{minipage}{0.85\textwidth}
\begin{itemize}
\item[\bf{INSTANCE:}]
\begin{itemize}
\item a directed graph $D=(V, A)$
\item a cost function $c: A \to \mathbb{R}_{+}$
\item a pair $(s, t)$ of distinct nodes $s, t \in V$   
\item a subset $T \subseteq A$ of {\it tariff arcs}
\end{itemize}
\item[\bf{SOLUTION:}]
\begin{itemize}
\item an assignment $\tar: T \to \mathbb{R}_{+}$ of tariffs to the arcs in $T$ 
\item an $st$-path $P$ of $D$ minimizing its total cost 
$\sum_{a\in P} c(a) + \sum_{a\in P \cap T} \tar(a)$
\end{itemize}
\item[\bf{OBJECTIVE:}]
\begin{itemize}
\item maximize the revenue $\sum_{a\in P \cap T} \tar(a)$
\end{itemize}
\end{itemize}
\end{minipage}
\end{figure}

Before going further, 
let us make two remarks on the above formulation: First, it is usually assumed that
there exists an $st$-path in $D$ that uses only arcs of $A - T$, since otherwise the optimum
is unbounded. Second, once the tariffs are fixed,
we can easily choose among all $st$-paths of minimum (total) cost one path $P$ that maximizes
the revenue. In other words, we assume
that the follower always makes the best choice for the leader.
This is a standard assumption, justified by the fact 
that decreasing the prices of every arc in $P \cap T$ by an arbitrarily small amount
ensures that $P$ is the unique $st$-path of minimum cost.

The {\stack} problem has recently been studied by Roch, Savard, and Marcotte~\cite{RSM05},
motivated by applications in transportation and telecommunications.
They proved that the problem is $\NP$-hard and described a polynomial-time algorithm
approximating the optimum within a ratio of $\frac{1}{2} \log_{2} |T| + 1$.

The purpose of this note is to show that the {\stack} problem is also $\APX$-hard:

\begin{theorem}
\label{th-APX}
For some $\epsilon > 0$, it is $\NP$-hard to approximate
{\stack} within a ratio of $1 - \epsilon$.
\end{theorem}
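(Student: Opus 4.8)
The plan is to give an approximation‑preserving reduction (an $L$‑reduction, or equivalently a gap‑preserving reduction) from a problem already known to be $\APX$‑hard. Natural sources are vertex cover on graphs of bounded degree, or the bounded‑occurrence version of \textsc{Max-3Sat}; both are NP‑hard to approximate within some absolute constant, and a single such constant gap is all we need, since the logarithmic‑ratio algorithm of Roch, Savard and Marcotte already rules out anything much stronger. Given an instance $I$ of the source problem, I would construct in polynomial time a \stack{} instance $D(I)$ together with two thresholds $\tau_{\mathrm{yes}}>\tau_{\mathrm{no}}$ with $\tau_{\mathrm{no}}/\tau_{\mathrm{yes}}$ bounded away from $1$ by a constant, such that a good optimum for $I$ forces $\OPT(D(I))\ge\tau_{\mathrm{yes}}$ while a bad optimum for $I$ forces $\OPT(D(I))\le\tau_{\mathrm{no}}$; composing this with the hardness of $I$ then yields Theorem~\ref{th-APX}.

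The mechanism I would exploit is that the mandatory tariff‑free $st$‑path — say of cost $B$ — caps the cost of whatever path the follower actually takes, hence caps (revenue collected) $+$ (non‑tariff cost incurred along that path). So I would build $D(I)$ around a backbone $s=w_0\to w_1\to\cdots\to w_k=t$ whose every segment $w_{j-1}\to w_j$ is a small choice gadget: a bundle of parallel tariff arcs, one per admissible local decision (which endpoint of an edge to place in the cover, or how to satisfy a clause), together with one non‑tariff ``penalty'' arc. A cost‑minimizing follower crosses each segment by its cheapest arc, so the leader's best response is to price each bundle just under its penalty; the penalty costs are tuned so that a backbone path accumulates total penalty below the intended value — equivalently, collects revenue close to $B$ minus that penalty — precisely when the sequence of local decisions it encodes is globally consistent, i.e.\ forms a feasible solution of $I$, in which case its revenue is an affine function of $|I|$ minus the solution's value. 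One then checks completeness: from a (near‑)optimal solution of $I$ one reads off tariffs $\tar$ on $T$ under which the follower's unique cheapest path is the corresponding backbone path with revenue exactly $\tau_{\mathrm{yes}}$, paying attention to the easily overlooked point that no ``shortcut'' path using only a proper subset of the intended tariff arcs becomes cheaper than $B$.

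The crux is soundness together with gap preservation. Unlike in an NP‑hardness argument, where separating two cases suffices, here one must show that \emph{every} tariff assignment yielding revenue above $\tau_{\mathrm{no}}$ can be decoded into a solution of $I$ whose value exceeds the soundness threshold of the source problem, and one must keep the additive slack small enough that the constant gap is not diluted — concretely, a matching pair of linear inequalities between $\OPT(I)$ and $\OPT(D(I))$. The shape of this step is: fix any tariffs $\tar$, let $P$ be the follower's response, note that $P$ meets each gadget once, bound the revenue $P$ collects per gadget by (roughly) that gadget's penalty, sum, and combine with the $B$‑budget; but making this bound tight requires the gadgets to be \emph{robust} — no non‑uniform pricing, no choice of leaving some tariff arcs cheap, and no unexpected routing of the follower should let the forced $st$‑path collect appreciably more than the penalties license. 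A concrete technical nuisance here is that a vertex of $I$ may lie in several gadgets whereas a simple path cannot reuse an arc, so each vertex must be realized by one tariff‑arc copy per gadget; enforcing that a profit‑seeking leader effectively prices all copies of a vertex consistently — via auxiliary coupling arcs or equalizing‑penalty sub‑gadgets that survive adversarial pricing — while keeping $D(I)$ polynomial in $|I|$, is the delicate part of the design, and this is where I expect the real work to lie.
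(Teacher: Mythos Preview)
Your high-level plan matches the paper's: a gap-preserving reduction from bounded-occurrence \textsc{Max-3Sat} (the paper uses 3SAT-5), with a backbone $s=s_0\to s_1\to\cdots\to s_n=t$ where each segment is a clause gadget containing three parallel tariff arcs (one per literal) plus a fixed-cost bypass arc of cost~$2$. You also correctly isolate the crux: a variable occurs in several gadgets, and nothing in the local design prevents the leader from pricing a literal ``true'' in one gadget and its negation ``true'' in another, letting the follower's path collect full revenue $2n$ regardless of satisfiability.

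Where your proposal stops is exactly where the paper's argument begins, and the mechanism is quite different from the ``auxiliary coupling arcs or equalizing-penalty sub-gadgets'' you sketch. The paper does \emph{not} enforce consistency locally. Instead it first reorders the clauses via a separate combinatorial lemma: in any graph of maximum degree $\Delta$ on $n$ vertices one can find, in polynomial time and using the Hajnal--Szemer\'edi equitable-colouring theorem, a linear ordering in which adjacent vertices are at distance at least $n/c_\Delta$. Applied to the ``conflict graph'' on clauses (two clauses adjacent if and only if they contain opposite literals; here $\Delta\le 12$), this guarantees that whenever clauses $C_i$ and $C_j$ contain opposite literals one has $j-i\ge n/624$. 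The paper then adds, for each such pair of opposite literals, a single non-tariff \emph{jump arc} $w_{i,p}\to v_{j,q}$ of fixed cost $j-i-1$. Soundness now runs: if the follower's optimal path uses a jump arc it skips $\Omega(n)$ gadgets and loses a constant fraction of revenue; if it uses no jump arc but traverses two tariff arcs for opposite literals, the jump arc between them is a cheaper alternative on that stretch, capping the revenue there by $j-i-1$ and again costing a constant fraction overall; otherwise the tariff arcs on the path encode a consistent partial assignment, hence satisfy at most $\delta n$ clauses, giving revenue at most $2\delta n$.

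So the ``real work'' you anticipate is not a robust local gadget but a global preprocessing step---the ordering lemma---that converts every consistency violation into a long-range shortcut the follower would prefer. Without that idea your outline is not yet a proof: the coupling-gadget route you suggest might be made to work, but you would have to actually exhibit gadgets that survive adversarial pricing, which is precisely the difficulty the paper sidesteps.
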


We conclude this introduction by mentioning other related works.
A generalization of {\stack}, the {\it {\toll} problem},
has been considered by Labb\'e, Marcotte, and Savard~\cite{LMS98}.
It involves multiple, weighted followers: the $i$th follower 
computes a shortest $s_{i}t_{i}$-path in the network once the tariffs
are fixed, and has a demand of $d_{i}$.
The objective is then to maximize the sum of the revenues obtained 
from each individual follower, weighted by their respective demands.
As observed by Roch {\it et al.}~\cite{RSM05}, 
the approximation algorithm mentioned above for {\stack}
directly gives a $O(k\log m)$-approximation algorithm in the case of unit demands, where
$k$ denotes the number of followers. 

A special case of the {\toll} problem occurs when every $s_{i}t_{i}$-path in $D$
uses at most one arc in $T$, and is known under the name {\it {\river}}. 
It has been considered by 
Bouhtou, Grigoriev, van Hoesel, van der Kraaij, Spieksma, and Uetz~\cite{BGvHvdKSU07}, 
and is closely related to the {\envyfree} problem
studied by Guruswami, Hartline, Karlin, Kempe, Kenyon, and McSherry~\cite{GHKKKS05}. 
Among others, both sets of authors proved 
(independently) that the {\river} problem is $\APX$-hard, even in the case of unit demands.
Briest, Hoefer, and Krysta~\cite{BHK08} later derived stronger inapproximability results,
relying on a recent inapproximability result of 
Demaine, Feige, Hajiaghayi, and Salavatipour~\cite{DFHStoappear} 
for {\envyfree}.
We note that each of these inapproximability results uses in a crucial way the fact that  
there are multiple followers. 
 
Finally, we mention that some other combinatorial optimization problems 
similar to {\stack} have been considered recently. 
This includes pricing edges of an undirected graph 
knowing that the follower will compute a minimum spanning tree~\cite{CDFJLNW07}, and
pricing vertices of a bipartite undirected graph when the follower buys 
a minimum cost vertex cover~\cite{BHK08}.
Other kinds of Stackelberg games in networks have been studied by 
Cole, Dodis, and Roughgarden~\cite{CDR03}, Roughgarden~\cite{R04}, and
Swamy~\cite{S07}.

\section{The Proof}

In order to prove Theorem~\ref{th-APX}, we need the following lemma 
on bounded-degree graphs. Here and throughout the text,
a {\sl linear ordering} of the vertices of a graph $G=(V,E)$ 
is a bijective mapping $\ell:V \rightarrow \{1, \dots, |V|\}$.

\begin{lemma}
\label{lem-lo}
Let $G=(V,E)$ be an undirected graph with maximum degree $\Delta \geq 1$ and
$n\geq c_{\Delta}$ vertices, where  $c_{\Delta} := 4 \Delta(\Delta +1)$.
Then a linear ordering $\ell$ of $V$ with 
$$
|\ell(u) - \ell(v)| \ge \frac{n}{c_{\Delta}}
$$
for every edge $uv\in E$ can be found in polynomial time.
\end{lemma}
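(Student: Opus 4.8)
The plan is to read $\ell$ off an equitable proper colouring of $G$. Since $G$ has maximum degree $\Delta$, it has a proper colouring whose colour classes $G_1,\dots,G_{\Delta+1}$ all have size $\lfloor n/(\Delta+1)\rfloor$ or $\lceil n/(\Delta+1)\rceil$: this is the Hajnal--Szemer\'edi theorem, and such a colouring can be found in polynomial time. Put $w:=\lceil n/c_\Delta\rceil$, so that it suffices to guarantee $|\ell(u)-\ell(v)|\ge w$ on every edge. I would place the classes in consecutive blocks of $\{1,\dots,n\}$: class $G_i$ fills an interval $I_i$ of $|G_i|$ positions, placed immediately after $I_{i-1}$. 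An edge whose endpoints lie in blocks $I_i$ and $I_j$ with $j\ge i+2$ then automatically spans a gap of at least $|G_{i+1}|+1$, which is far more than $w$ because $n\ge c_\Delta$ forces each $|G_i|$ to dwarf $w$. So all the difficulty is with edges joining \emph{consecutive} classes $G_i$ and $G_{i+1}$.

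That is the one real obstacle: across the ``seam'' between $I_i$ and $I_{i+1}$, an edge from a vertex at the top of $I_i$ to a vertex at the bottom of $I_{i+1}$ could have its endpoints just one position apart. To defeat it, reserve inside each block its \emph{bottom zone} (the first $w-1$ positions) and its \emph{top zone} (the last $w$ positions) -- these are disjoint because $|G_i|\ge 2w-1$ -- and decide their occupants by a greedy sweep over $i$. Having chosen a set $X_i\subseteq G_i$ of size $w$ for the top zone of $I_i$, pick $Y_{i+1}\subseteq G_{i+1}\setminus N(X_i)$ of size $w-1$ for the bottom zone of $I_{i+1}$, then pick $X_{i+1}\subseteq G_{i+1}\setminus Y_{i+1}$ of size $w$, and continue; the two ends of the chain need no $Y_1$ and no $X_{\Delta+1}$. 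Each choice is feasible because $|N(X_i)|\le w\Delta$, and a short computation from $n\ge c_\Delta=4\Delta(\Delta+1)$ and $|G_i|\ge\lfloor n/(\Delta+1)\rfloor$ gives $|G_i|\ge w(\Delta+1)-1$, which is exactly the room a block needs to hold a bottom zone, a top zone and the forbidden set $N(X_{i-1})$. All remaining vertices of each $G_i$ go, in arbitrary order, into the middle of $I_i$; since $G_i$ is independent their order is immaterial.

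It remains to check the distances. Within a block there are no edges, and blocks at index-distance at least $2$ were already handled, so the only case left is an edge $uv$ with $u\in G_i$ and $v\in G_{i+1}$. If $u\notin X_i$, then $\ell(u)$ is at most $(\text{last position of }I_i)-w$ while $\ell(v)\ge(\text{first position of }I_{i+1})=(\text{last position of }I_i)+1$, so $\ell(v)-\ell(u)\ge w+1$. If $u\in X_i$, then $Y_{i+1}\cap N(X_i)=\emptyset$ forces $v\notin Y_{i+1}$, so $\ell(v)\ge(\text{first position of }I_{i+1})+(w-1)$ while $\ell(u)\le(\text{last position of }I_i)$, giving $\ell(v)-\ell(u)\ge w$. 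Thus $|\ell(u)-\ell(v)|\ge w\ge n/c_\Delta$ in every case, and each step -- equitable colouring, the greedy sweep, the placement -- is polynomial. The only place demanding care is the arithmetic that the reserved zones and $N(X_{i-1})$ simultaneously fit inside a block, i.e.\ the inequality $|G_i|\ge w(\Delta+1)-1$; the constant $4$ in $c_\Delta$ is there precisely to give this comparison slack to spare, comfortably absorbing the $N(X_{i-1})$ term and the floors and ceilings.
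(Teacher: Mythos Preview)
Your proof is correct and is essentially the same argument as the paper's: both start from a Hajnal--Szemer\'edi equitable $(\Delta+1)$-colouring, lay out the colour classes in consecutive blocks, and split each block into a bottom--middle--top structure where the bottom of block $i$ is chosen disjoint from the neighbourhood of the top of block $i-1$, so that the seam between consecutive blocks carries no short edges. The only cosmetic differences are that the paper uses symmetric zones of size $s=\lfloor n/(2\Delta(\Delta+1))\rfloor$ and treats $\Delta=1$ separately, whereas you use zones of sizes $w-1$ and $w$ with $w=\lceil n/c_\Delta\rceil$ and handle all $\Delta\ge 1$ uniformly.
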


A (proper, vertex) coloring of a graph $G$
is {\sl equitable} if every two color classes differ in size by at most 1. 
The proof of Lemma~\ref{lem-lo} relies on
the following result on equitable colorings.

\begin{theorem}[Hajnal and Szemer{\'e}di~\cite{HS70}]
\label{th-delta-plus-1}
Every graph $G$ with maximum degree $\Delta$ can be equitably colored
with $\Delta + 1$ colors.
\end{theorem}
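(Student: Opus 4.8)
The plan is to prove the equivalent-looking statement that whenever $k := \Delta+1$ divides $n$, one can find a proper $(\Delta+1)$-coloring of $G$ in which \emph{every} color class has size exactly $s := n/k$. This suffices: given an arbitrary graph $G$, I would first pad it with isolated vertices until the number of vertices becomes a multiple of $k$. Padding leaves the maximum degree unchanged, and deleting the added vertices from an equal-sized coloring of the padded graph yields color classes differing in size by at most one, which is precisely an equitable coloring of $G$. So from now on I assume all classes must attain the common size $s$.

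I would then argue by induction on the number of edges. Fix an ordering $e_1, \dots, e_{|E|}$ of the edges of $G$ and add them one at a time; since each intermediate graph is a subgraph of $G$, the invariant $\Delta \le k-1$ is preserved throughout. The base case is the empty graph, where any partition of $V$ into $k$ classes of size $s$ is a valid equal-sized coloring. For the inductive step, suppose $V_1, \dots, V_k$ is an equal-sized proper $k$-coloring of the current graph and I add a new edge $xy$. If $x$ and $y$ lie in different classes the coloring remains proper and I am done, so the interesting case is $x, y \in V_i$. The difficulty is that I must recolor to remove this conflict while keeping every class of size exactly $s$: simply moving $y$ to a class where it has no neighbor destroys the size balance, so a compensating chain of recolorings is required.

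The engine of the proof is an augmenting-path argument on the color classes. Call a vertex $w \in V_a$ \emph{movable to} $V_b$ if $w$ has no neighbor in $V_b$, so that relocating it keeps the coloring proper while shifting one unit of size from $V_a$ to $V_b$. After moving $y$ out of $V_i$ into some class $V_j$ where it is movable, class $V_i$ has a deficit (size $s-1$) and $V_j$ a surplus (size $s+1$). I would form an auxiliary digraph on the classes, placing an arc $V_a \to V_b$ whenever some vertex of $V_a$ is movable to $V_b$, and search for a directed path from the surplus class $V_j$ to the deficient class $V_i$; executing one move along each arc of such a path leaves every intermediate class unchanged in size while erasing the surplus and filling the deficit, completing the recoloring. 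The main obstacle -- and the genuine technical heart of both the original Hajnal--Szemer\'edi argument and the later Kierstead--Kostochka proof -- is the case in which no such augmenting path exists. There one shows that the classes split into ``terminal'' and ``non-terminal'' sets whose mutual adjacency is rigidly constrained by the failure of every candidate move, and then a delicate counting/discharging argument exploiting the bound $\deg(v) \le \Delta = k-1$ forces the existence of a vertex with more than $\Delta$ neighbors, contradicting the degree hypothesis. I expect essentially all of the work to be concentrated in this terminal-case analysis; the reduction to the divisible case and the setup of the augmentation are routine by comparison.
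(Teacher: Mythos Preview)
The paper does not prove this theorem at all: it is quoted as a known result of Hajnal and Szemer\'edi, and the only further remark is that Kierstead and Kostochka later gave a short proof yielding a polynomial-time algorithm. So there is no ``paper's own proof'' to compare your proposal against; the theorem is used as a black box in the proof of Lemma~\ref{lem-lo}.

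As for the proposal itself, your outline does follow the shape of the Kierstead--Kostochka argument (reduce to the divisible case, induct on edges, and rebalance via movability along an auxiliary digraph on color classes), but it is not a proof: you explicitly identify the terminal case---where no augmenting path exists---as ``the genuine technical heart'' and then stop, saying only that ``a delicate counting/discharging argument \dots forces the existence of a vertex with more than $\Delta$ neighbors.'' That is precisely the part that needs to be written down; everything preceding it is standard setup. Until the terminal-case analysis is actually carried out, what you have is a plan rather than a proof, and the plan omits exactly the step where the degree bound $\Delta \le k-1$ is genuinely used.
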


Kierstead and Kostochka~\cite{KK08} recently obtained a short proof
of Theorem~\ref{th-delta-plus-1}. Their proof yields also a polynomial-time algorithm
finding such a coloring. 

\begin{proof}[Proof of Lemma~\ref{lem-lo}]
The lemma is easily seen to hold if $\Delta=1$, hence we assume $\Delta \geq 2$.
Using the algorithmic version of Theorem~\ref{th-delta-plus-1} given by 
Kierstead and Kostochka~\cite{KK08},
we first find in polynomial time an equitable coloring $S_1, \dots, S_{\Delta + 1}$ of the vertices
of $G$. 
Let 
$$
s:=\left\lfloor \frac{n}{2\Delta(\Delta + 1)}\right\rfloor.
$$
We have
$$
s \ge \frac{n}{2\Delta(\Delta + 1)} - 1 \geq 
\frac{n}{2\Delta(\Delta + 1)} - \frac{n}{4\Delta(\Delta + 1)} = \frac{n}{c_{\Delta}}.
$$
We show that a linear ordering $\ell$ of the vertices of $G$ with
$|\ell(u) - \ell(v)| \ge s$ for every edge $uv\in E$ can be
found in polynomial time, which implies the claim.

Observe that
\begin{equation}
\label{eq-size-Si}
|S_{i}| > s(\Delta + 1) \quad \quad \textrm{ for every } i\in \{1,\dots, \Delta + 1\}.
\end{equation}
Indeed, using $\Delta\geq 2$ and $n \geq 4\Delta(\Delta + 1)$,
Inequality~\eqref{eq-size-Si} can be derived as follows:
$$
|S_{i}| > \frac{n}{\Delta + 1} - \Delta  
= \frac{4n - 4\Delta(\Delta + 1)}{4(\Delta + 1)}  
\geq \frac{3}{4} \cdot \frac{n}{(\Delta + 1)} 
\geq  \frac{\Delta + 1}{2\Delta} \cdot \frac{n}{(\Delta + 1)} 
\geq s(\Delta + 1).
$$

We partition each set $S_{i}$ into three subsets $S_{i}^{1}, S_{i}^{2}, S_{i}^{3}$
as follows: First, the partition of $S_{1}$ is chosen arbitrarily, ensuring only
$|S_{1}^{1}|=|S_{1}^{3}|=s$. Then, for $i=2,3,\dots, \Delta +1$, define $S_{i}^{1}$ as a subset of
$s$ vertices of $S_{i}$ having no neighbor in $S_{i-1}^{3}$. Such a subset
always exists because, by~\eqref{eq-size-Si}, there are most  
$|S_{i-1}^{3}|\cdot \Delta = s \Delta < |S_{i}| - s$ 
vertices in $S_{i}$ with a neighbor in $S_{i-1}^{3}$. Let then $S_{i}^{3}$ be any
subset of $S_{i} - S_{i}^{1}$ with cardinality $s$, and 
set finally $S_{i}^{2} := S_{i} - (S_{i}^{1} \cup S_{i}^{3})$.

Consider the partial order $\prec$ on $V$ where, for $u,v\in V$, we have $u \prec v$ 
if $u\in S_{i}^{p}$, $v\in S_{j}^{q}$ with $i<j$, or with $i=j$ and  $p < q$. 
Define then $\ell$ as any linear ordering of $V$ 
compatible with $\prec$. 

By construction, the set $S_{i-1}^{3} \cup S_{i}^{1}$ is
a stable set for every $i\in \{2, \dots, \Delta +1\}$. It follows  $|\ell(u) - \ell(v)| \geq s$
for every edge $uv \in E$.
Since the linear ordering $\ell$ can clearly be computed 
in polynomial time, this completes the proof of the lemma.
\end{proof}

We turn now to the proof of Theorem~\ref{th-APX}. We note that the gadgets used in the reduction
are essentially the same as the one used by Roch {\it et al}.~\cite{RSM05} in their
$\NP$-hardness proof.

\begin{proof}[Proof of Theorem~\ref{th-APX}]
A 3SAT-5 formula is a CNF formula in which every clause contains 
exactly three literals, every variable appears in exactly five
clauses, and a variable does not appear in a clause more than once. 
Such a formula is said to be {\sl  $\delta$-satisfiable} if at most a $\delta$-fraction
of its clauses are satisfiable simultaneously. 
Our reduction is from the problem of distinguishing
between satisfiable 3SAT-5 formulae and those which are $\delta$-satisfiable. It is known that
this problem is $\NP$-hard for some constant $\delta$ with $0 < \delta < 1$; see Feige~\cite{F98}.

Suppose thus that we are given a 3SAT-5 formula $\varphi$ with $n$ clauses which is either
satisfiable or $\delta$-satisfiable. 
Two literals of  $\varphi$ are {\sl opposite} if one is positive, the other negative, and they both
correspond to the same variable.
Let $G_{\varphi}$ be the (simple, undirected) graph having 
one vertex per clause of $\varphi$, and where two distinct vertices are adjacent 
if there exists two opposite literals in the union of the corresponding two clauses.
Notice that $G_{\varphi}$ has maximum degree  $\Delta \le 12$.
 
Using Lemma~\ref{lem-lo}, we obtain in polynomial time a linear ordering $\ell$ of 
the vertices of $G_{\varphi}$ such that
$$
|\ell(u) - \ell(v)| \geq \frac{n}{4\Delta(\Delta + 1)} \geq \frac{n}{624}
$$
for every edge $uv \in E$.
Denote by $C_1, \dots, C_n$ the clauses of $\varphi$, in the order given by $\ell$.
Denote also by $C_{i}^{1}, C_{i}^{2}, C_{i}^{3}$ the three literals 
of $C_i$, for every  $i \in \{1, \dots, n\}$.

We define an instance of the {\stack} problem as follows.
Each clause $C_{i}$ has a corresponding {\sl clause-gadget}, 
described in Figure~\ref{fig-gadget}. 
\begin{figure}
\centering
\includegraphics[width=0.3\textwidth]{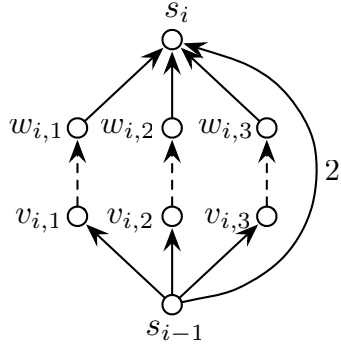}
\caption{Gadget for clause $C_i$. Tariff arcs are represented with dashed 
lines and only non-zero fixed costs are indicated.}
\label{fig-gadget}
\end{figure}
Take first the union 
of all these clause-gadgets (notice that node $s_{i}$
appears in the gadgets of both $C_{i}$ and $C_{i+1}$, for $i \in \{1, \dots, n-1\}$).
Then, for every $i,j \in \{2, \dots, n\}$ with $i<j$ and every $p,q \in \{1,2,3\}$,
add the arc $(w_{i,p}, v_{j,q})$ 
with cost $j - i - 1$ if literals $C_{i}^{p}$ and $C_{j}^{q}$ are opposite.
The latter arcs are said to be {\sl jump arcs}. This defines the directed graph $D=(V,A)$
and the cost function $c(\cdot)$. The set of tariff arcs is 
$$
T:= \{ (v_{i,j}, w_{i,j}): 1 \leq i \leq n, 1\leq j \leq 3\}, 
$$
and the origin-destination pair of the follower is $(s, t)$, with $s:=s_{0}$ and $t:=s_{n}$.

If $n < 1248$, we simply use brute force to decide whether $\varphi$ is satisfiable
or only $\delta$-satisfiable. Hence, we may assume $n \geq 1248$, and thus  
\begin{equation}
\label{eq-jump-arc}
j - i \geq\frac{n}{624} \geq 1 + \frac{n}{1248}
\end{equation}
for every jump arc $(w_{i,p},v_{j,q})$.
Let
$$
\lambda := \max \left\{\delta, 1- \frac{1}{2496}\right\},
$$
and denote by $\OPT$ the maximum revenue achievable on this instance of {\stack}. 

\begin{claim}
\label{claim-gap}
\begin{enumerate}[(a)]
The following holds:
\item \label{enum-gap-sat} if $\varphi$ is satisfiable then $\OPT=2n$;
\item \label{enum-gap-delta-sat} if $\varphi$ is $\delta$-satisfiable then $\OPT \le \lambda \cdot 2n$.
\end{enumerate}
\end{claim}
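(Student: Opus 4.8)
My plan is to analyze the structure of optimal tolls and optimal follower paths on this instance, and to exhibit a natural correspondence between truth assignments of $\varphi$ and certain "canonical" toll vectors / follower paths. The key observation should be that, within each clause-gadget for $C_i$, the follower path has essentially three options corresponding to the three literals $C_i^1, C_i^2, C_i^3$: routing through the tariff arc $(v_{i,j}, w_{i,j})$ "selects" literal $C_i^j$. Because jump arcs connect exactly the gadgets of clauses sharing opposite literals, and because the cost $j-i-1$ of a jump arc $(w_{i,p}, v_{j,q})$ is calibrated against the fixed costs inside the intermediate gadgets, a follower who passes through $w_{i,p}$ and later through $v_{j,q}$ (opposite literals) can "shortcut" the tariff arcs of the intermediate clauses, thereby avoiding paying toll there. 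So setting all tariffs to $1$ (say) yields revenue $2$ per clause from which the follower actually pays, but the follower will use jump arcs to dodge payment precisely when the selected literals are inconsistent. Thus a consistent selection of literals — i.e., one coming from a genuine truth assignment — forces the follower to pay in every gadget, giving revenue $2n$.

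For part~\eqref{enum-gap-sat}, I would take a satisfying assignment of $\varphi$, use it to pick in each clause a true literal $C_i^{j(i)}$, set $\tar(v_{i,j(i)}, w_{i,j(i)})$ to the appropriate value (matched to the fixed costs so that the follower is indifferent but the leader-favorable choice routes through that tariff arc in every gadget) and set the other tariffs high enough to be unused. I then argue no jump arc helps the follower: a jump arc $(w_{i,p}, v_{j,q})$ between opposite literals $C_i^p, C_j^q$ is usable on a cheapest path only if the follower's selection at clause $i$ is literal $p$ and at clause $j$ is literal $q$; but two opposite literals cannot both be true, contradicting that our selection came from a truth assignment. Hence the follower pays the full toll in each of the $n$ gadgets, yielding $\OPT \ge 2n$; the reverse inequality $\OPT \le 2n$ follows because each gadget contributes at most $2$ to any revenue (one must check the gadget of Figure~\ref{fig-gadget} has this property, bounding tariffs by the cost of the toll-free route through the gadget).

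For part~\eqref{enum-gap-delta-sat}, I would argue the contrapositive flavor: given any toll assignment, the follower's chosen $st$-path induces, gadget by gadget, a (partial) selection of literals, and I extend it to a full truth assignment $\tau$ of the variables. Since $\varphi$ is $\delta$-satisfiable, $\tau$ falsifies at least a $(1-\delta)$-fraction of clauses; I want to charge a revenue loss against each such falsified clause. The subtlety is that the follower need not literally route through a tariff arc in every gadget, so I must relate "gadgets where the leader collects no toll" to "clauses falsified by the extracted assignment," using the jump arcs: whenever the follower dodges the toll in gadget $j$ via a jump arc from $w_{i,p}$, the literals $C_i^p$ and $C_j^q$ are opposite and both "selected," which is exactly an inconsistency that the extracted $\tau$ cannot honor — so some clause among $C_i, \dots, C_j$ is falsified. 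Here is where the linear-ordering property~\eqref{eq-jump-arc} from Lemma~\ref{lem-lo} becomes essential: a jump arc spans at least $n/1248$ gadgets, so a single dodged-toll event is "blamed" on a long interval of clauses, but conversely the number of gadgets the follower can skip is controlled, and a short calculation will show the total collected revenue is at most $\max\{\delta, 1 - 1/2496\}\cdot 2n = \lambda \cdot 2n$.

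The main obstacle I anticipate is the bookkeeping in part~\eqref{enum-gap-delta-sat}: carefully defining the extracted truth assignment from an arbitrary (not necessarily canonical) optimal follower path, and setting up the charging argument so that skipped gadgets are provably accompanied by falsified clauses in a way that loses only a constant fraction — this is where the spacing guarantee~\eqref{eq-jump-arc} and the constant $1248$ are tuned to make the arithmetic close. Establishing the per-gadget revenue bound of $2$ (part~\eqref{enum-gap-sat}, reverse inequality) is routine once the gadget of Figure~\ref{fig-gadget} is unpacked, since it just amounts to comparing the toll-free route through a gadget with any toll-paying route.
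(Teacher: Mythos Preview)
Your plan for part~\eqref{enum-gap-sat} is essentially the paper's: set the tariff on the ``true'' literal arc in each gadget to~$2$, the others to $2n+1$, and observe that any $st$-path using a jump arc $(w_{i,p},v_{j,q})$ is forced through both tariff arcs $(v_{i,p},w_{i,p})$ and $(v_{j,q},w_{j,q})$, one of which carries tariff $2n+1$ since $C_i^p$ and $C_j^q$ are opposite. The bound $\OPT\le 2n$ is just the existence of the toll-free path $s_0\to s_1\to\cdots\to s_n$ of cost $2n$; your per-gadget phrasing is equivalent.

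For part~\eqref{enum-gap-delta-sat}, however, your proposed charging argument has a genuine gap, and the paper's route is both simpler and different. Your claim that ``whenever the follower dodges the toll via a jump arc from $w_{i,p}$ to $v_{j,q}$ \dots\ some clause among $C_i,\dots,C_j$ is falsified'' does not follow: the literals $C_i^p$ and $C_j^q$ are opposite, so any extracted assignment $\tau$ makes at most one of them true, but both clauses $C_i$ and $C_j$ may still be satisfied by their \emph{other} literals, and the skipped clauses $C_{i+1},\dots,C_{j-1}$ need not be falsified at all. So there is no way to charge revenue loss to falsified clauses in the manner you describe.

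The paper avoids this entirely with a three-case analysis on an optimal path $P$ (of total cost $z\le 2n$). First, if $P$ contains \emph{any} jump arc $(w_{i,p},v_{j,q})$, then the fixed cost on $P$ is at least $j-i-1\ge n/1248$ by~\eqref{eq-jump-arc}, whence $\OPT\le z-(j-i-1)\le 2n-n/1248\le\lambda\cdot 2n$. Second, if $P$ has no jump arc but contains two tariff arcs for opposite literals, say at gadgets $i<j$, then since $P$ is shortest and the jump arc $(w_{i,p},v_{j,q})$ of cost $j-i-1$ exists, the subpath $P'$ of $P$ from $w_{i,p}$ to $v_{j,q}$ has total cost (hence revenue) at most $j-i-1$; combined with $\tar(a)\le 2$ on each of the remaining $\le n-(j-i-1)$ tariff arcs of $P$, this again gives $\OPT\le 2n-(j-i-1)\le\lambda\cdot 2n$. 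Third, if neither occurs, $P\cap T$ is a \emph{consistent} partial assignment satisfying $|P\cap T|$ clauses, so $|P\cap T|\le\delta n$ and $\OPT\le 2|P\cap T|\le\delta\cdot 2n\le\lambda\cdot 2n$. The point you are missing is that the spacing~\eqref{eq-jump-arc} is used not to ``control how many gadgets can be skipped'' but to guarantee that a \emph{single} jump arc, or a \emph{single} inconsistency, already has fixed cost large enough to cap the revenue; no global bookkeeping is needed.
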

\begin{proof}

Suppose first that $\varphi$ is satisfiable and 
consider a truth assignment of the variables
satisfying all clauses of $\varphi$.
For every $i\in \{1, \dots, n\}$ and $p\in \{1,2,3\}$, 
set the tariff of arc $(v_{i,p}, w_{i,p})$ to 2 if the literal $C_{i}^{p}$
is true in the truth assignment, to $2n+1$ otherwise. 
Observe that, with these tariffs, any $st$-path that includes a jump arc has (total) cost at least
$2n + 1$. Also, the cost of every $st$-path is at least $2n$, and there exists
one such path with cost exactly $2n$ that uses one tariff arc per clause-gadget.
Hence, $\OPT \geq 2n$ in this case. On the other hand, $\OPT \leq 2n$ always holds since there
exists an $st$-path with cost $2n$ in $(V, A - T)$, that is, which avoids all tariff arcs.
This proves part~\eqref{enum-gap-sat} of the claim.

Assume now that $\varphi$ is 
$\delta$-satisfiable and let $\tar(\cdot)$ be an optimal assignment of tariffs to arcs in $T$.
Let $P$ be any $st$-path of minimum (total) cost giving a
revenue of $\OPT$, and denote by $z$ its cost. 
(Thus $z = \sum_{a\in P}c(a) + \sum_{a\in P\cap T}\tar(a) \leq 2n$.)
If the path $P$ includes a jump arc $(w_{i,p},v_{j,q})$, 
then using~\eqref{eq-jump-arc} we obtain
$$
\OPT \le z - (j-i-1) \le 2n - (j-i-1) \leq 2n - \frac{n}{1248} \le \lambda \cdot 2n.
$$
Hence, without loss of generality $P$ includes no jump arc. 
It follows $\tar(a) \leq 2$
for every tariff arc $a\in P \cap T$, because of the arcs $(s_{i-1}, s_{i})$ with fixed cost 2. 

Suppose $P$ includes two tariff arcs corresponding to
opposite literals, say arcs $(v_{i,p}, w_{i,p})$ and $(v_{j,q}, w_{j,q})$ with $i < j$. 
Then the revenue 
$$
\sum_{a\in P' \cap T} \tar(a)
$$
given by the subpath $P'$ of $P$ going from $w_{i,p}$ 
to $v_{j,q}$ is at most $j - i - 1$. This is because there exists a jump arc $(w_{i,p},v_{j,q})$
in $D$, with fixed cost $j - i - 1$. Hence, we deduce
$$
\OPT = \sum_{a\in (P- P') \cap T} \tar(a) + \sum_{a\in P' \cap T} \tar(a)
 \le 2\big(n - (j-i-1)\big) + (j-i-1)\le \lambda \cdot 2n,
$$
using again~\eqref{eq-jump-arc}. 
We may thus assume that $P$ does not contain two tariff arcs corresponding to
opposite literals. 

Now, the set $P \cap T$ of tariff arcs included in $P$ 
directly defines a truth assignment that satisfies at least $|P \cap T|$ 
clauses of $\varphi$ (variables not appearing in $P \cap T$ are set arbitrarily).
Since $\varphi$ is only $\delta$-satisfiable, we have $|P \cap T| \le \delta$, and thus
$$
\OPT \le 2|P \cap T| \le \delta \cdot 2n \le \lambda \cdot 2n.
$$
Part \eqref{enum-gap-delta-sat} of the claim follows.
\end{proof}

By Claim~\ref{claim-gap}, any polynomial-time algorithm approximating {\stack} within a ratio 
strictly better than  $\lambda$ could be used to decide, 
in polynomial time, whether $\varphi$ is satisfiable or 
$\delta$-satisfiable. This completes the proof of Theorem~\ref{th-APX}.
\end{proof}

\paragraph{Acknowledgments.}
The author wishes to thank Adrian Vetta for useful discussions,
and Jean Cardinal and Samuel Fiorini for their comments on an earlier version of this note. 
This work was supported by the {\em Actions de Recherche Concert\'ees (ARC)\,} fund of 
the {\em Communaut\'e fran\c{c}aise de Belgique}.

\bibliography{maxtoll}
\bibliographystyle{plain}

\end{document}